\newtheorem{theorem}{Theorem}
\newtheorem{lemma}[theorem]{Lemma}
\newcounter{constraint}[section]
\newenvironment{constraint}[1][]{\refstepcounter{constraint}\par\medskip
    \textit{Assumption~\theconstraint #1} \rmfamily}{\medskip}
\newcommand{\rvec}[1]{{\mathbf{\MakeLowercase{#1}}}}
\newcommand{\rs}[1]{\mathbf{\MakeLowercase{#1}}}
\newcommand{\dvec}[1]{{#1}}
\def\BibTeX{{\rm B\kern-.05em{\sc i\kern-.025em b}\kern-.08em
    T\kern-.1667em\lower.7ex\hbox{E}\kern-.125emX}}
\begin{document}
\title{A Lower-bound for Variable-length Source Coding in Linear-Quadratic-Gaussian Control with Shared Randomness}
\author{Travis C. Cuvelier, \IEEEmembership{Student Member, IEEE}, Takashi Tanaka,  \IEEEmembership{Senior Member, IEEE}, and Robert W. Heath, Jr., \IEEEmembership{Fellow, IEEE}
\thanks{First submission on March 12, 2022. This work was supported in part by the National Science Foundation under Grant No. ECCS-1711702 and and CAREER Award \#1944318. }
\thanks{T. Cuvelier is with the Department
of Electrical and Computer Engineering, The University of Texas at Austin,
TX, 78712 USA e-mail: tcuvelier@utexas.edu.}
\thanks{T. Tanaka is with the Department
of Aerospace Engineering and Engineering Mechanics, The University of Texas at Austin,
TX, 78712 USA e-mail: ttanaka@utexas.edu.}
\thanks{R. Heath is with the Department
of Electrical and Computer Engineering, North Carolina State University, Raleigh, 
NC, 27606 USA e-mail: rwheath2@ncsu.edu.}}

\maketitle
\thispagestyle{empty}
\begin{abstract}
In this letter, we consider a Linear Quadratic Gaussian (LQG) control system where feedback occurs over a noiseless binary channel and derive lower bounds on the minimum communication cost (quantified via the channel bitrate) required to attain a given control performance. We assume that at every time step an encoder can convey a packet containing a variable number of bits over the channel to a decoder at the controller. Our system model provides for the possibility that the encoder and decoder have shared randomness, as is the case in systems using dithered quantizers. We define two extremal prefix-free requirements that may be imposed on the message packets; such constraints are useful in that they allow the decoder, and potentially other agents to uniquely identify the end of a transmission in an online fashion. We then derive a lower bound on the rate of prefix-free coding in terms of directed information; in particular we show that a previously known bound still holds in the case with shared randomness. We generalize the bound for when prefix constraints are relaxed, and conclude with a rate-distortion formulation. 
\end{abstract}
\begin{IEEEkeywords}
Information theory and control, control over communications
\end{IEEEkeywords}

\section{Introduction}
\label{sec:introduction}
\IEEEPARstart{I}{n} this letter, we derive and analyze lower bounds on the minimum bitrate of feedback communication required to obtain a given LQG control performance. We consider the setting of variable-length coding; at each discrete time instant an encoder that can fully observe the plant conveys a packet containing a variable number of bits to a decoder co-located with the controller. Various prefix constraints can be imposed on the packets; these allow the decoder, and perhaps other agents, to uniquely identify the end of each codeword given varying degrees of common knowledge/side information (SI). Prefix constraints are useful when a communication medium is shared; upon detecting the end one user's codeword, other users can identify the channel as free-to-use.

We assume that the packets are conveyed over a noiseless, error-free communication channel. The bounds we derive for noiseless channels are useful even when considering real-world noisy channels. They can be compared directly with an appropriate notion of the ``real" channel's capacity. If our lower bound exceeds this capacity, the desired control performance is not achievable. Likewise, if source-channel separation is imposed on the design of the communication architecture, our lower bounds apply to the optimal source codec (encoder/decoder pair) design.

We allow for the possibility that the encoder and decoder have access to \textit{shared randomness}; namely an IID sequence of exogenous random variables that are revealed causally to both the encoder and decoder. Shared randomness of this nature arises in settings where the encoder and decoder use dithered quantization. In dithered quantization, randomness is intentionally introduced into the quantization process to make the quantization error more amenable to analysis. In particular, dithered quantization has been used to design schemes for minimum bitrate LQG control (cf. \cite{silvaFirst}\cite{tanakaISIT}). While these achievability approaches are often compared to known lower bounds that apply without shared randomness, in this letter we clarify that these lower bounds do not change even when a dither signal is available.

Recently, \cite{milanCorrection} demonstrated that the proofs of the lower bounds to which the achievability approaches in \cite{silvaFirst} and \cite{tanakaISIT} were compared are invalid when the encoder and decoder share randomness. The proof of the lower bound in \cite{silvaFirst} is corrected in \cite{milanCorrection}. In this letter, we formalize two extremal prefix constraints. These notions have been conflated in the prior literature. We correct the proof of the lower bound in \cite{tanakaISIT} using a novel proof technique. We show that for all the prefix constraints we consider and irrespective of the marginal distribution of the shared randomness, the channel bitrate is lower bounded by the time-average directed information (DI) from the state vector to the control input. Notably, this is the same lower bound that applies without shared randomness \cite{SDP_DI}\cite{kostinaTradeoff}, demonstrating that systems employing dithered quantization are subject to the same fundamental lower bounds as systems without dithering \cite{SDP_DI}.  While our result is consistent with \cite{milanCorrection}, we believe our proof is simpler. 

The bitrate of lossless source coding can be reduced by relaxing prefix constraints \cite{verduVariableLength}. We show how the derived lower bounds change upon lifting prefix constraints and conclude with a rate distortion formulation, specialized to the case of time-invariant plants and an infinite horizon, following from \cite{SDP_DI}. As our lower bound is proved in the finite horizon, other relevant rate distortion formulations can be derived from, e.g. \cite{SDP_DI} and \cite{reviewerPaper}.  A brief literature review now follows.
\begin{figure}[t]
	\centering
	\includegraphics[scale = .12]{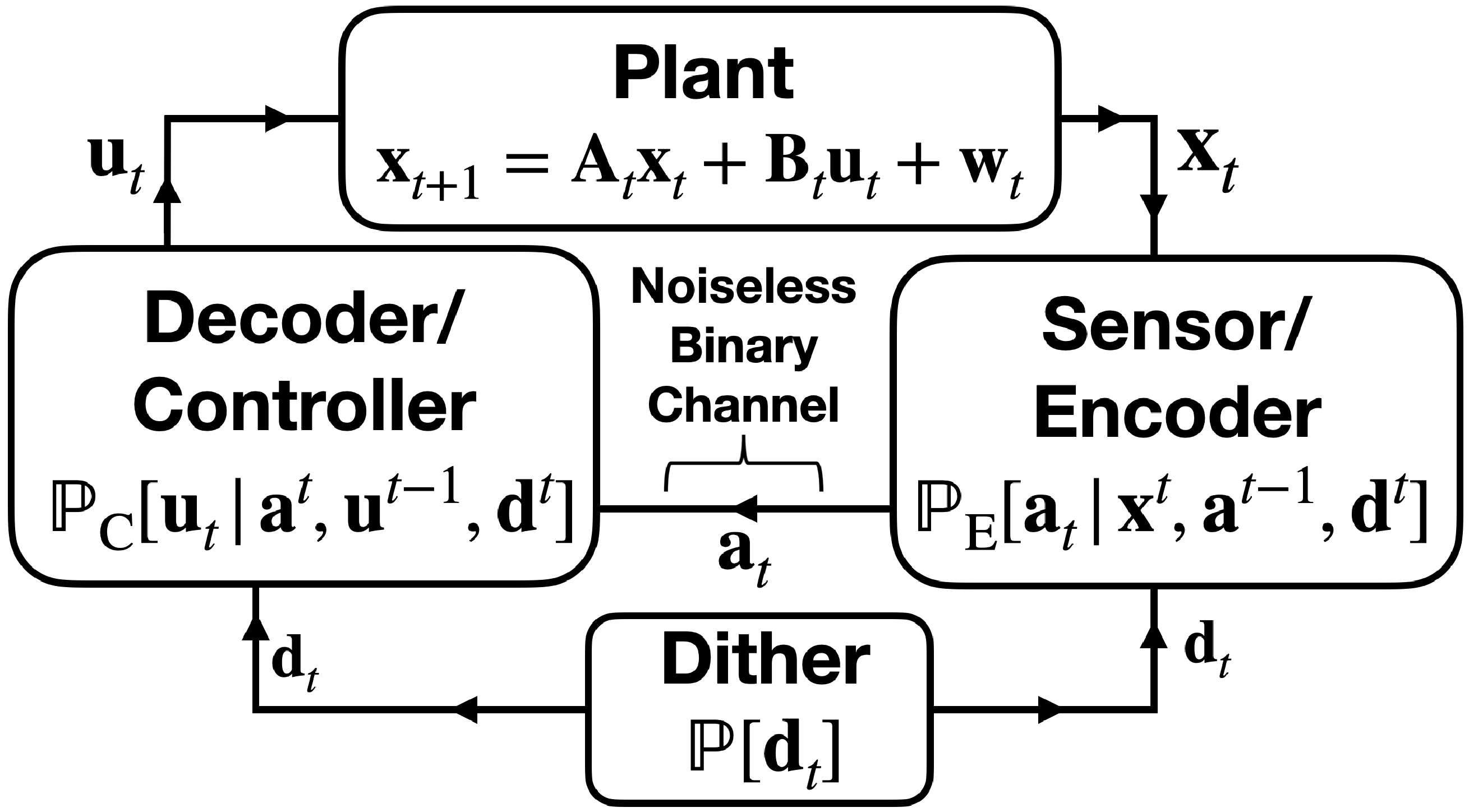}
	\caption{The factorization in (\ref{eq:exog}) states that both the shared dither and the process noise are drawn independently given all prior encoder and and control actions. The factorization in (\ref{eq:cwfact}) provides that the codeword at time $t$ can be drawn from some distribution that depends on its prior observations of the state and dither sequence, as well as its past actions. Likewise, the control input can be drawn randomly given its the decoder's observations of received codewords, the dither, and its past actions.}\label{fig:ditharch}
\end{figure}
\subsection{Literature Review}
Our work follows from the problem formulation of \cite{silvaFirst}, which considered a SISO LQG control system where feedback measurements were conveyed from an encoder to a decoder over a noiseless binary channel. In the variable length setting and enforcing a prefix constraint on the packets, \cite{silvaFirst} derived a lower bound for the time-average expected channel bitrate in terms of Massey's DI \cite{masseyDI}. Given a constraint on the LQG cost, \cite{silvaFirst} showed that the lower bound is nearly achievable when the encoder and decoder share access to a common dither signal. In \cite{SDP_DI} and \cite{tanakaISIT}, the work in \cite{silvaFirst} was extended to MIMO plants. In particular, \cite{SDP_DI} developed a rate-distortion formulation in terms of semidefinite programming; a semidefinite program (SDP) was derived to compute the tradeoff between the minimum DI and LQG cost. The achievability of the lower bound, again assuming dithering, was demonstrated in \cite{tanakaISIT}. Analytical lower bounds on the relevant DI as a function of the maximum tolerable LQG cost were developed in \cite{kostinaTradeoff}. It was also shown that the entropy rate of an innovations quantizer approaches this bound without the use of dithering. Notably, \cite{kostinaTradeoff} generalized the entropy lower-bound for prefix-free coding to the setting without prefix constraints (see also \cite{verduVariableLength}). 

We consider a setup where the encoder and decoder share randomness. This is in contrast to a setup where the decoder can access a more traditional notion of SI, namely a random variable correlated with the plant's state vector. The impact of this latter notion of SI on the communication/LQG cost tradeoff was investigated in \cite{kostinaSI}, \cite{photisSI}, \cite{oronNew},  and \cite{ourciss}. In particular, these works consider linear/Gaussian observations of the plant available at the decoder. Rate distortion formulations were considered in \cite{kostinaSI}, \cite{photisSI}, and \cite{oronNew}, meanwhile an achievability approach (assuming noiseless SI also available at the encoder) was given in \cite{ourciss}. We will show that shared randomness does not affect lower bounds on bitrate. 

The lower bound on bitrate derived in  \cite{silvaFirst} purported to apply to quantization/coding schemes with shared dither sequences at the encoder and decoder. A flaw in the proof of the bound from \cite{silvaFirst} was recently discovered by \cite{milanCorrection}. The proof was revised using new DI data processing inequalities derived in \cite{milanMainResult}. In our letter, while we prove a lower bound similar to that in \cite{milanCorrection}, our problem formulation and proof techniques differ significantly. The data processing inequalities in \cite{milanMainResult} apply to general feedback systems consisting of deterministic causal stages, where system blocks are randomized through exogenous inputs. We consider an alternative formulation, and prove that a data processing inequality holds under natural conditional independence assumptions between the system variables. Our lower bound then follows directly. 
\subsection{Our contributions}
In summary, the contributions of this letter are:
\begin{enumerate}
    \item we define two different prefix constraints that can be imposed on the feedback packets. In prior work, these constraints have been used somewhat ambiguously. Namely, we define a strict as well as a relaxed constraint and show that they are subject to the same lower bound. We highlight the operational significance of these constraints in control systems. 
    \item when the encoder and decoder share randomness, we derive a DI data processing inequality directly from the factorization of the joint distribution of the system variables. This inequality proves that for two extremal notions of what it means to be ``prefix-free", the DI lower bound from \cite{tanakaISIT} holds even when the encoder and decoder share randomness. Following from \cite{kostinaTradeoff} and \cite{verduVariableLength}, we generalize the bound to codecs without prefix constraints.
\end{enumerate}
 The lower bounds lead to a rate distortion formulation, which, following from \cite{SDP_DI}, can be written as an SDP.
\subsection{Notation}
We denote constant scalars and vectors in lowercase $x$, scalar and vector random variables in boldface $\rs{x}$, and matrices by capital letters $X$. The set of finite-length binary strings is denoted $\{0,1\}^*$. We use $H$ for the entropy of a discrete random variable and $I$ for mutual information (MI). For time domain sequences, let $\{\rs{x}_{t}\}$ denote $(\rs{x}_{0},\rs{x}_{1},\dots)$. We let $\rs{x}_{a}^{b}$ denote $(\rs{x}_{a},\dots,\rs{x}_{b})$ if $b\ge a$, and let $\rs{x}_{a}^{b}=\emptyset$ otherwise. Likewise, let $\rs{x}^{b}= \rs{x}_{0}^{b}$. Given $\{\rs{x}_{t}\}$ we define the shifted sequence $\{\rs{x}_{t}^{+}\}$ by $(0,\rs{x}_{0},\rs{x}_1,\dots)$. For $\{\rs{a}_{t},\rs{b}_{t},\rs{c}_{t}\}$, Massey's causally conditioned DI is defined by (cf. \cite{masseyDI})
\begin{align}\label{eq:didef}
    I(\rs{a}^{T-1}\rightarrow \rs{b}^{T-1}||\rs{c}^{T-1}) = \sum_{t=0}^{T-1}I(\rs{a}^{t};\rs{b}_{t}|\rs{b}^{t-1},\rs{c}^{t}).
\end{align}

\section{System Model and Problem Formulation}\label{sec:systemmodel}
The system model under consideration is depicted in Figure \ref{fig:ditharch}. We consider a general MIMO plant with a feedback model where communication takes place over a noiseless binary channel. We assume that a time-invariant plant is fully observable to an sensor/encoder block, which conveys a binary codeword $\rvec{a}_{t}\in\{0,1\}^*$ over the channel to a combined decoder/controller. Upon receipt of the codeword, the decoder/controller designs the control input. We denote the state vector as $\rvec{x}_t\in \mathbb{R}^{m}$ and the control input as $\rvec{u}_{t}\in\mathbb{R}^{u}$. We assume that the sensor/encoder and decoder/controller share access to a common random \textit{dither signal}, $\{\rs{d}_{t}\}$. The dither is assumed to be IID over time. Note that this system model includes systems where a dither is unavailable as a special case (e.g. we could always set $\rs{d}_{t}\overset{\mathrm{a.s.}}{=}0$ for all $t$). The process noise $\rvec{w}_{t}\sim\mathcal{N}(\dvec{0},{W})$ is assumed to be IID over time. We assume ${W}\succ{0}_{m\times m}$. We assume assume that $\rvec{x}_{0}\sim\mathcal{N}(\dvec{0},{X}_0)$ for some $X_0\succeq 0$, and that $\{\rs{w}_{t}\}$, $\{\rs{d}_{t}\}$, and $\rs{x}_{0}$ are mutually independent. Given a sequence of system matrices ${A}_{t}\in\mathbb{R}^{m\times m}$ and feedback gain matrices ${B}_{t}\in\mathbb{R}^{m\times u}$ for $t\ge 0$ the plant dynamics are given by
\begin{align}\label{eq:ssmodel}
    \rvec{x}_{t+1} = {A}_{t}\rvec{x}_{t}+{B}_{t}\rvec{u}_{t}+\rvec{w}_{t}. 
\end{align} We assume that the sensor/encoder and the decoder/controller are stochastic; i.e. that they can use randomized strategies.\footnote{E.g., the encoder can draw the codeword randomly given its input.} The sensor/encoder policy is assumed to be a sequence of causally conditioned kernels given by 
\begin{align}\label{eq:encdithpol}
    \mathbb{P}_{\mathrm{E}}[\rvec{a}^{\infty}|| \rvec{d}^{\infty},\rvec{x}^{\infty}] = \{ \mathbb{P}_{\mathrm{E}}[\rvec{a}_{t}|\rvec{a}^{t-1},\rvec{d}^{t},\rvec{x}^{t}]\text{ for }t\in\mathbb{N}_{0}\}. 
\end{align} Likewise, the corresponding decoder/controller policy is given by the sequence of causally conditioned kernels 
\begin{align}\label{eq:contdithpol}
    \mathbb{P}_{\mathrm{C}}[\rvec{u}^{\infty}|| \rvec{a}^{\infty},\rvec{d}^{\infty}] = \{ \mathbb{P}_{\mathrm{C}}[\rvec{u}_{t}|\rvec{a}^{t},\rvec{d}^{t},\rvec{u}^{t-1}]\text{ for }t\in\mathbb{N}_{0}\}.
\end{align} By (\ref{eq:ssmodel}), for all $t$, $\rs{x}^{t}$ is a deterministic function of $\rvec{x}_{0}$, $\rvec{u}^{t-1}$, and $\rvec{w}^{t-1}$. We assume that the one-step transition kernels between  $\rvec{a}_{t}$, $\rvec{d}_{t}$, $\rvec{u}_{t}$, and  $\rvec{w}_{t}$, factorize for all $t\ge0$ as \begin{subequations}\label{eq:ditherFactorization}
\begin{multline}\label{eq:cwfact}
      \mathbb{P}[\rvec{a}_{t+1},\rvec{u}_{t+1}|\rvec{a}^{t},\rvec{d}^{t+1},\rvec{u}^{t},\rvec{w}^{t},\rvec{x}_{0}] =\\\mathbb{P}_{\mathrm{E}}[\rvec{a}_{t+1}|\rvec{a}^{t},\rvec{d}^{t+1},\rvec{x}^{t+1}]\mathbb{P}_{\mathrm{C}}[\rvec{u}_{t+1}|\rvec{a}^{t+1},\rvec{d}^{t+1},\rvec{u}^{t}]\text{, and }
\end{multline}  
\begin{multline}\label{eq:exog} \mathbb{P}[\rvec{a}_{t+1},\rvec{d}_{t+1},\rvec{u}_{t+1},\rvec{w}_{t+1}|\rvec{a}^{t},\rvec{d}^{t},\rvec{u}^{t},\rvec{w}^{t},\rvec{x}_{0}] = \\\mathbb{P}[\rvec{a}_{t+1},\rvec{u}_{t+1}|\rvec{a}^{t},\rvec{d}^{t},\rvec{u}^{t},\rvec{w}^{t},\rvec{x}_{0}]\mathbb{P}[\rvec{d}_{t+1}]\mathbb{P}[\rvec{w}_{t+1}].
\end{multline}
 \end{subequations}
Implications of these factorization are discussed in Fig. \ref{fig:ditharch}. 
 
 The length of the binary codewords $\{\rs{a}_{t}\}$ provides a notion of communication cost. This is motivated by a scenario where measurements from a remote sensor platform are conveyed over wireless to control a plant. In general, minimizing the necessary bitrate from the remote platform to the controller minimizes the amount of physical layer resources that must be allocated to the particular link. The problem of interest is to minimize this bitrate subject to a constraint on the LQG control performance. In this work, we are concerned primarily with deriving lower bounds on the bitrate. 
 \begin{figure*}[t]
	\centering
	\includegraphics[scale = .2515]{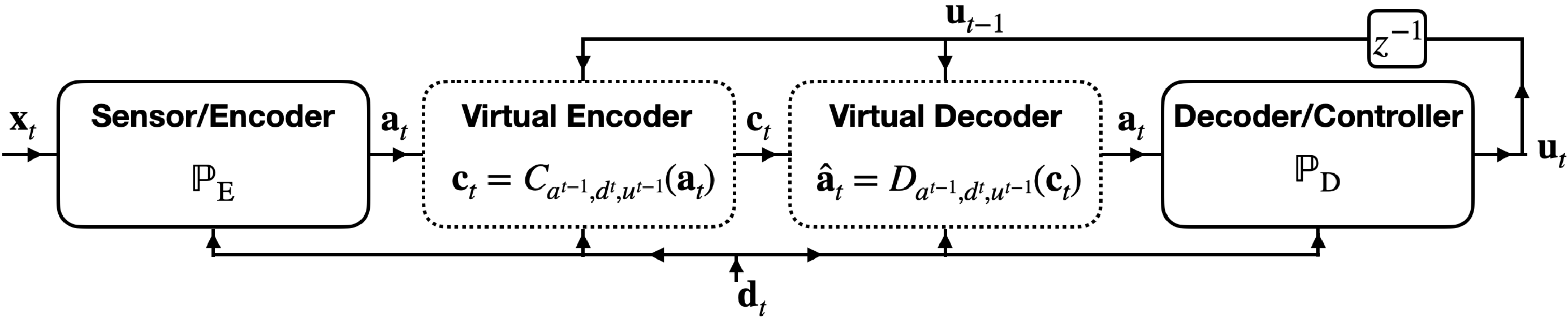}
    \vspace{-.35cm}
	\caption{The virtual encoder produces the codeword $\rvec{c}_{t}$ given access to $\rs{a}_{t}$ and realizations of $(\rs{a}^{t-1},\rs{d}^{t},\rs{u}^{t-1})$. The virtual codec is \textit{lossless}; we require that given $\rs{c}_{t}$ and the realizations of $(\rs{a}^{t-1},\rs{d}^{t},\rs{u}^{t-1})$, the virtual encoder reconstruct $\rs{a}_{t}$ exactly, ensuring equivalent control performance in the original system. The virtual codewords $\rs{c}_{t}$ must also satisfy Assumption \ref{cs:constraint2}. Notably, the virtual encoder has access to more side information than the sensor/encoder (cf. Fig. \ref{fig:ditharch}).    } \label{fig:virtualcodec}
\end{figure*} 
 At every time $t$, we require $\rvec{a}_{t}$ to satisfy a prefix constraint. This allows the decoder (and possibly other agents sharing the same communication network) to uniquely identify the end of the transmission from the encoder. For $\dvec{a}\in\{0,1\}^*$, let $\ell(\dvec{a})$ denote the length of $\dvec{a}$ in bits. The prefix constraint allows us to derive simple lower-bounds on $\mathbb{E}[\ell(\rvec{a}_{t})]$. We are interested in the optimization
\begin{equation}\label{eq:codewordLenghtOptimization}
\begin{aligned}
& \underset{\mathbb{P}_\mathrm{E}, \mathbb{P}_{\mathrm{C}}}{\inf}  \text{ }\frac{1}{T+1}\sum\nolimits_{t=0}^{T}\mathbb{E}[\ell(\rvec{a}_{t})] \\ &\text{s.t. }   \frac{1}{T+1}\sum\nolimits_{t=0}^{T}\mathbb{E}[\lVert \rvec{x}_{t+1} \rVert_{{Q}}^{2} +\lVert \rvec{u}_{t} \rVert_{{R}}^{2}] \le \gamma,
\end{aligned} 
\end{equation} where ${Q}\succeq {0}$, ${R}\succ {0}$, and $\gamma$ is the maximum tolerable LQG cost. The minimization is over admissible sensor/encoder and decoder/controller policies described by 
(\ref{eq:encdithpol}) and (\ref{eq:contdithpol}). In \cite{SDP_DI}, it was shown that, for policies without additive dithering, (\ref{eq:codewordLenghtOptimization}) is lower-bounded by an SDP (technically a log-determinant optimization) where a particular DI is minimized over the space of linear/Gaussian policies. In the sequel, we show that this lower bound still holds for architectures with additive dither satisfying (\ref{eq:encdithpol})-(\ref{eq:ditherFactorization}). 
\section{Lower bounds}
We now derive lower bounds on the rate of prefix-free coding within the feedback loops of Fig. \ref{fig:ditharch}. These bounds apply irrespective of the marginal distribution of the dither signal $\rs{d}_{t}$, and thus also apply to systems without dithering.

\subsection{Directed information lower bound}\label{ssec:dilb} We first formally define various prefix constraints that may be imposed on the codeword $\rs{a}_t$. We consider two distinct notions, and demonstrate that the same lower bound applies to both. We first require that, at every time $t$, $\rs{a}_t$ is a codeword from a prefix-free code that can be decoded by any decoder with knowledge of the marginal distribution $\mathbb{P}_{\rs{a}_t}$. Assumption \ref{cs:constraint1} formalizes this constraint. 

\begin{constraint}\label{cs:constraint1}
    For all distinct $a_1,a_2\in\{0,1\}^*$ with \newline $\mathbb{P}_{\rs{a}_t}[\rs{a}_t=a_1]>0$ and $\mathbb{P}_{\rs{a}_t}[\rs{a}_t=a_2]>0$, $a_1$ is not a prefix of $a_2$ and vice-versa.
\end{constraint} 

Assumption \ref{cs:constraint1} was used, implicitly, as the notion ``prefix-free" in proofs of \textit{lower bounds} in \cite{silvaFirst} and \cite{tanakaISIT}. Assumption \ref{cs:constraint1} ensures that the decoder can uniquely identify the end of the codeword \textit{without} relying on its knowledge of the previously received codewords $\rs{a}^{t-1}$, its previously designed control inputs $\rs{u}^{t-1}$, and the common randomness $\rs{d}^{t}$ it shares with the encoder. However, Assumption \ref{cs:constraint1} is perhaps too restrictive; at time $t$ both the encoder and the decoder have access to some common knowledge, including $\rs{a}^{t-1}$ and $\rs{d}^{t}$. While information known \textit{only} to the encoder cannot reduce the minimum codeword length, information known to the decoder can. We consider prefix-free codes that are instantaneous with respect to realizations of the random variables known to the decoder in Assumption \ref{cs:constraint2}.
\begin{constraint}\label{cs:constraint2}
    For any realizations $(\rs{a}^{t-1},\rs{d}^{t},\rs{u}^{t-1})=(a^{t-1},d^{t},u^{t-1})$, for all distinct $a_1,a_2\in\{0,1\}^*$ with $\mathbb{P}_{\rs{a}_t|\rs{a}^{t-1},\rs{d}^{t},\rs{u}^{t-1}}[\rs{a}_t=a_1|(\rs{a}^{t-1},\rs{d}^{t},\rs{u}^{t-1})=(a^{t-1},d^{t},u^{t-1})]>0$ and $\mathbb{P}_{\rs{a}_t|\rs{a}^{t-1},\rs{d}^{t},\rs{u}^{t-1}}[\rs{a}_t=a_2|(\rs{a}^{t-1},\rs{d}^{t},\rs{u}^{t-1})=(a^{t-1},d^{t},u^{t-1})]>0$, $a_1$ is not a prefix of $a_2$ and vice-versa.
\end{constraint} 

This requirement ensures that given the knowledge of $\rs{a}^{t-1}$, $\rs{d}^{t}$, and $\rs{u}^{t-1}$ the decoder can uniquely identify the end of the codeword. While Assumption \ref{cs:constraint2} is somewhat relaxed in comparison to Assumption \ref{cs:constraint1}, the implementation of a coding scheme under Assumption \ref{cs:constraint2} is likely more cumbersome with respect to that of Assumption \ref{cs:constraint1}.  Under Assumption \ref{cs:constraint1}, a decoder can detect the end of the codewords without considering realizations of $(\rs{a}^{t-1}$, $\rs{d}^{t},\rs{u}^{t-1})$, whereas under Assumption \ref{cs:constraint2} this is not necessarily the case. In fact, an implication of Assumption \ref{cs:constraint2} is that the encoder and decoder may need to agree on infinitely many codebooks. The notion ``prefix-free" used in the \textit{achievability} architectures of \cite{silvaFirst} and \cite{tanakaISIT} do not satisfy Assumption \ref{cs:constraint1}, but do satisfy Assumption \ref{cs:constraint2}. In \cite{silvaFirst} and \cite{tanakaISIT}, the codewords are prefix-free \textit{given} the shared dither signal. The following lower bound applies under either Assumption \ref{cs:constraint1} or \ref{cs:constraint2}.
\begin{theorem}\label{thm:converse}
   In a system conforming to  Fig. \ref{fig:ditharch} and (\ref{eq:ditherFactorization}) with fixed encoder and decoder policies such that either Assumption \ref{cs:constraint1} or \ref{cs:constraint2} is satisfied at every $t$, the time-average expected codeword length satisfies
 \begin{align} \label{eq:converse_thm}   \frac{1}{(T+1)}\sum_{i=0}^{T}\mathbb{E}[\ell(\rs{a}_t)] \ge \frac{1}{(T+1)}I(\rs{x}^T\rightarrow\rs{u}^{T}).
 \end{align}
\end{theorem}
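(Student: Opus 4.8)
The plan is to reduce the statement to a directed-information data-processing inequality and then to prove that inequality via a telescoping identity. Concretely, I would show that for every $t$,
\[
\mathbb{E}[\ell(\rs{a}_t)]\;\ge\;H(\rs{a}_t\mid\rs{a}^{t-1},\rs{d}^{t},\rs{u}^{t-1})\;\ge\;I(\rs{x}^{t};\rs{a}_t\mid\rs{a}^{t-1},\rs{d}^{t},\rs{u}^{t-1}),
\]
and that
\begin{equation}\label{eq:proposalDPI}
\sum_{t=0}^{T} I(\rs{x}^{t};\rs{a}_t\mid\rs{a}^{t-1},\rs{d}^{t},\rs{u}^{t-1})\;\ge\;I(\rs{x}^{T}\rightarrow\rs{u}^{T});
\end{equation}
summing the first chain over $t=0,\dots,T$ and combining with (\ref{eq:proposalDPI}) yields (\ref{eq:converse_thm}) after dividing by $T+1$. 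The first inequality in the chain is where the prefix-free hypotheses enter: under Assumption \ref{cs:constraint2}, for each realization of $(\rs{a}^{t-1},\rs{d}^{t},\rs{u}^{t-1})$ the induced codebook is prefix-free, so Kraft's inequality gives $\mathbb{E}[\ell(\rs{a}_t)\mid \rs{a}^{t-1},\rs{d}^{t},\rs{u}^{t-1}]\ge H(\rs{a}_t\mid \rs{a}^{t-1},\rs{d}^{t},\rs{u}^{t-1})$ almost surely, and I would then average; under Assumption \ref{cs:constraint1} the same argument applied to the marginal codebook gives the stronger $\mathbb{E}[\ell(\rs{a}_t)]\ge H(\rs{a}_t)$, which dominates the conditional entropy. (This is the content of the ``virtual codec'' of Fig.~\ref{fig:virtualcodec}: the decoder-side SI $(\rs{a}^{t-1},\rs{d}^{t},\rs{u}^{t-1})$ may be exploited for free.) The second inequality is just $H\ge I$. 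If $\sum_{t\le T}\mathbb{E}[\ell(\rs{a}_t)]=\infty$ the bound is vacuous, so I may assume the informations below are finite.

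For (\ref{eq:proposalDPI}) I would introduce the ``potential'' $\Phi_{t}:=I(\rs{x}^{t};\rs{a}^{t},\rs{d}^{t}\mid\rs{u}^{t})$ for $t\ge 0$, with $\Phi_{-1}:=0$, and prove the identity
\begin{equation}\label{eq:proposalTel}
I(\rs{x}^{t};\rs{u}_t\mid\rs{u}^{t-1})\;=\;\Phi_{t-1}-\Phi_{t}\;+\;I(\rs{x}^{t};\rs{a}_t\mid\rs{a}^{t-1},\rs{d}^{t},\rs{u}^{t-1}),\qquad t\ge0.
\end{equation}
Granting (\ref{eq:proposalTel}), summation over $t=0,\dots,T$ telescopes to $I(\rs{x}^{T}\rightarrow\rs{u}^{T})=\sum_{t=0}^{T}I(\rs{x}^{t};\rs{a}_t\mid\rs{a}^{t-1},\rs{d}^{t},\rs{u}^{t-1})-\Phi_{T}$, and (\ref{eq:proposalDPI}) drops out since $\Phi_{T}\ge0$. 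To establish (\ref{eq:proposalTel}) I would expand $I(\rs{x}^{t};\rs{a}^{t},\rs{d}^{t},\rs{u}_t\mid\rs{u}^{t-1})$ by the chain rule in two orders. Peeling $\rs{u}_t$ off last and using the controller kernel of (\ref{eq:cwfact}), i.e. $\rs{u}_t\perp\rs{x}^{t}\mid(\rs{a}^{t},\rs{d}^{t},\rs{u}^{t-1})$, identifies the expression as $I(\rs{x}^{t};\rs{a}^{t},\rs{d}^{t}\mid\rs{u}^{t-1})$; peeling $\rs{u}_t$ off first identifies it as $I(\rs{x}^{t};\rs{u}_t\mid\rs{u}^{t-1})+\Phi_{t}$. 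It then remains to show $I(\rs{x}^{t};\rs{a}^{t},\rs{d}^{t}\mid\rs{u}^{t-1})=\Phi_{t-1}+I(\rs{x}^{t};\rs{a}_t\mid\rs{a}^{t-1},\rs{d}^{t},\rs{u}^{t-1})$: I would split off $(\rs{a}_t,\rs{d}_t)$, use that the dither increment is fresh, $\rs{d}_t\perp(\rs{x}^{t},\rs{a}^{t-1},\rs{d}^{t-1},\rs{u}^{t-1})$, to kill the $\rs{d}_t$ term, and use that the process-noise increment is fresh, $\rs{w}_{t-1}\perp(\rs{x}^{t-1},\rs{a}^{t-1},\rs{d}^{t-1},\rs{u}^{t-1})$, so that conditioned on $(\rs{x}^{t-1},\rs{u}^{t-1})$ the new state $\rs{x}_{t}=A\rs{x}_{t-1}+B\rs{u}_{t-1}+\rs{w}_{t-1}$ is an invertible function of $\rs{w}_{t-1}$ and carries no information about $(\rs{a}^{t-1},\rs{d}^{t-1})$, collapsing the residual to $\Phi_{t-1}$.

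The hard part will not be any of these manipulations but supplying their hypotheses rigorously. The factorizations (\ref{eq:cwfact})--(\ref{eq:ditherFactorization}) are stated as one-step transition kernels, whereas (\ref{eq:proposalTel}) consumes joint conditional-independence statements of the type just used. I would therefore first prove, by induction on $t$, that the joint law of $(\rs{x}_0,\{\rs{w}_t,\rs{d}_t,\rs{a}_t,\rs{u}_t\})$ factors as $\mathbb{P}_{\rs{x}_0}$ times the kernels in (\ref{eq:encdithpol}), (\ref{eq:ditherFactorization}), and then read off the three facts needed --- $\rs{u}_t\perp\rs{x}^{t}\mid(\rs{a}^{t},\rs{d}^{t},\rs{u}^{t-1})$, $\rs{d}_t\perp(\rs{x}^{t},\rs{a}^{t-1},\rs{d}^{t-1},\rs{u}^{t-1})$, and $\rs{w}_{t-1}\perp(\rs{x}^{t-1},\rs{a}^{t-1},\rs{d}^{t-1},\rs{u}^{t-1})$ --- invoking that $\rs{x}^{t}$ is a deterministic function of $(\rs{x}_0,\rs{u}^{t-1},\rs{w}^{t-1})$ and that $\{\rs{d}_t\}$, $\{\rs{w}_t\}$, $\rs{x}_0$ are mutually independent. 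Notably the marginal law of the dither plays no role, which is the source of the claimed robustness. Carefully tracking which sub-$\sigma$-algebras the kernels ``see'', and hence which conditional mutual informations vanish, is the only real subtlety; once it is settled, (\ref{eq:proposalTel}), (\ref{eq:proposalDPI}), and the theorem follow.
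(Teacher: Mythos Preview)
Your proposal is correct and follows essentially the same route as the paper: the entropy lower bound via Kraft's inequality under each assumption matches Lemma~\ref{lemm:milb}, and your telescoping identity with the potential $\Phi_t=I(\rs{x}^{t};\rs{a}^{t},\rs{d}^{t}\mid\rs{u}^{t})$ is precisely the paper's Lemma~\ref{lemm:didpi}, where $\phi_t=\Phi_t-\Phi_{t-1}$ and the same three conditional-independence facts (controller Markov chain, fresh dither, fresh process noise) are invoked. Your chain-rule bookkeeping is organized slightly differently but is equivalent, and your direct Kraft argument under Assumption~\ref{cs:constraint2} is a mild streamlining of the paper's virtual-codec detour.
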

Theorem \ref{thm:converse} follows from Lemmas \ref{lemm:milb} and \ref{lemm:didpi}, stated presently.
\begin{lemma}\label{lemm:milb}
If $\rs{a}_t$ satisfies either of the prefix-free conditions outlined in Assumptions \ref{cs:constraint1} or \ref{cs:constraint2}, we have
\begin{align}\label{eq:lb_lemm}
    \frac{1}{(T+1)}\sum_{i=0}^{T}\mathbb{E}[\ell(\rs{a}_t)] \ge \frac{1}{(T+1)}I(\rs{x}^T\rightarrow\rs{a}^{T}||\rs{d}^{T},\rs{u}_{+}^{T}).\linebreak
\end{align}
\end{lemma}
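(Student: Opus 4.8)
The plan is to bound the expected codeword length letter by letter using the prefix-free property, rewrite the resulting sum in terms of conditional entropies, and then recognize that sum as an upper bound on the causally conditioned directed information in (\ref{eq:lb_lemm}). It is worth noting at the outset that this lemma uses only the prefix-free assumptions and not the factorization (\ref{eq:ditherFactorization}); the latter enters only in the companion lemma that passes from $\rs{a}^{T}$ to $\rs{u}^{T}$. Recall also that the shifted sequence has a deterministic leading coordinate, $\rs{u}_{+}^{t}=(0,\rs{u}_0,\dots,\rs{u}_{t-1})$, so conditioning on $\rs{u}_{+}^{t}$ is the same as conditioning on $\rs{u}^{t-1}$; I will use this identification freely. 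The intermediate claim I would aim for is that, under \emph{either} Assumption \ref{cs:constraint1} or Assumption \ref{cs:constraint2}, for every $t$,
\begin{align*}
\mathbb{E}[\ell(\rs{a}_t)] \ge H(\rs{a}_t|\rs{a}^{t-1},\rs{d}^{t},\rs{u}^{t-1}),
\end{align*}
i.e.\ the codeword length is at least the conditional entropy of $\rs{a}_t$ given all the side information the decoder holds before it must parse $\rs{a}_t$.

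To establish the intermediate claim I would treat the two assumptions separately. Under Assumption \ref{cs:constraint1}, the support of the marginal $\mathbb{P}_{\rs{a}_t}$ is prefix-free, so Kraft's inequality together with the textbook converse for lossless prefix codes gives $\mathbb{E}[\ell(\rs{a}_t)]\ge H(\rs{a}_t)$, after which conditioning cannot increase entropy. Under Assumption \ref{cs:constraint2}, for $\mathbb{P}$-almost every realization $(\rs{a}^{t-1},\rs{d}^{t},\rs{u}^{t-1})=(a^{t-1},d^{t},u^{t-1})$ the support of the conditional law of $\rs{a}_t$ is prefix-free, so the same converse applied conditionally on each realization yields $\mathbb{E}[\ell(\rs{a}_t)|a^{t-1},d^{t},u^{t-1}]\ge H(\rs{a}_t|\rs{a}^{t-1}{=}a^{t-1},\rs{d}^{t}{=}d^{t},\rs{u}^{t-1}{=}u^{t-1})$; averaging both sides over the joint law of $(\rs{a}^{t-1},\rs{d}^{t},\rs{u}^{t-1})$ and using the tower property gives the claimed unconditional bound. (Should the relevant conditional entropy be infinite on a positive-probability set, then $\mathbb{E}[\ell(\rs{a}_t)]=\infty$ and (\ref{eq:lb_lemm}) is vacuous.)

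Granting the intermediate claim, I would finish using nonnegativity of discrete entropy. Since $\rs{a}_t$ takes values in $\{0,1\}^*$, $H(\rs{a}_t|\rs{x}^{t},\rs{a}^{t-1},\rs{d}^{t},\rs{u}^{t-1})\ge 0$, hence
\begin{align*}
I(\rs{x}^{t};\rs{a}_t|\rs{a}^{t-1},\rs{d}^{t},\rs{u}^{t-1}) &= H(\rs{a}_t|\rs{a}^{t-1},\rs{d}^{t},\rs{u}^{t-1}) - H(\rs{a}_t|\rs{x}^{t},\rs{a}^{t-1},\rs{d}^{t},\rs{u}^{t-1}) \\
&\le H(\rs{a}_t|\rs{a}^{t-1},\rs{d}^{t},\rs{u}^{t-1}).
\end{align*}
Summing over $t=0,\dots,T$, combining with the intermediate claim, and replacing $\rs{u}^{t-1}$ by $\rs{u}_{+}^{t}$ in each term, I obtain $\sum_{t=0}^{T}\mathbb{E}[\ell(\rs{a}_t)] \ge \sum_{t=0}^{T} I(\rs{x}^{t};\rs{a}_t|\rs{a}^{t-1},\rs{d}^{t},\rs{u}_{+}^{t})$, and the right-hand side is exactly $I(\rs{x}^{T}\rightarrow\rs{a}^{T}||\rs{d}^{T},\rs{u}_{+}^{T})$ by definition (\ref{eq:didef}), with $\rs{x}$, $\rs{a}$, and $(\rs{d},\rs{u}_{+})$ playing the roles of $\rs{a}$, $\rs{b}$, and $\rs{c}$. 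Dividing by $T+1$ yields (\ref{eq:lb_lemm}).

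The only step I expect to need real care is the disintegration under Assumption \ref{cs:constraint2}: deducing the unconditional entropy lower bound on $\mathbb{E}[\ell(\rs{a}_t)]$ from a prefix-free property that is guaranteed only conditionally on each realization of the decoder's side information. Beyond that, the remaining danger is purely clerical — keeping the conditioning sets straight, in particular the bookkeeping $\rs{u}_{+}^{t}\leftrightarrow\rs{u}^{t-1}$ — and the substance is just the classical Kraft/entropy converse and nonnegativity of discrete conditional entropy.
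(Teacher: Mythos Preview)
Your proposal is correct and follows essentially the same route as the paper: bound $\mathbb{E}[\ell(\rs{a}_t)]$ below by $H(\rs{a}_t\mid\rs{a}^{t-1},\rs{d}^{t},\rs{u}^{t-1})$ via Kraft's inequality (unconditionally under Assumption~\ref{cs:constraint1}, realization-by-realization then averaged under Assumption~\ref{cs:constraint2}), drop the nonnegative discrete conditional entropy to get the mutual-information term, and sum to recognize the directed information. The only cosmetic difference is that for Assumption~\ref{cs:constraint2} the paper wraps the conditional Kraft argument in a ``virtual lossless codec'' device, whereas you apply Kraft directly to the conditional support of $\rs{a}_t$; these are the same argument.
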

\begin{proof}
We first derive a bound on the codeword length under Assumption \ref{cs:constraint1}. Assume that the encoder and decoder policies are fixed and conform to Assumption \ref{cs:constraint1}. Consider $\rs{a}_t$ itself as an information source with a range in $\{0,1\}^*$. Define the identity map  $C^{\mathrm{I}}: \{0,1\}^*\rightarrow\{0,1\}^*$ such that for all $a\in\{0,1\}^*$, $C^{\mathrm{I}}(a) = a$. By Assumption \ref{cs:constraint1}, at every time $t$, $C^{\mathrm{I}}$ is a lossless, prefix-free source code for $\rs{a}_t$. Thus, we have (cf. \cite[Theorem 5.3.1]{elemIT})
\begin{IEEEeqnarray}{rCl}\label{eq:converse1}
\mathbb{E}[\ell(\rs{a}_t)]&=&\mathbb{E}[\ell(C^{\mathrm{I}}(\rs{a}_t))] \label{eq:converse1ident}\\&\ge &H(\rs{a}_t)\label{eq:converse1shannon} \\ &\ge& H(\rs{a}_t|\rs{a}^{t-1}, \rs{d}^{t}, \rs{u}^{t-1},)\label{eq:converse1condred} \\ &\ge&I(\rs{a}_t;\rs{x}^{t}|\rs{a}^{t-1},\rs{d}^{t},\rs{u}^{t-1})\label{eq:defmi1}
\end{IEEEeqnarray} where (\ref{eq:converse1ident}) follows since $C^{\mathrm{I}}$ is an identify map, (\ref{eq:converse1shannon}) follows from the fact that $C^{\mathrm{I}}$ is a prefix-free code for $\rs{a}_t$ and thus has an expected length lower bounded by the entropy of $\rs{a}_t$. Equation (\ref{eq:converse1condred}) follows from the fact that conditioning reduces entropy. Finally, (\ref{eq:defmi1}) follows from subtracting the (nonnegative) discrete conditional entropy  $H(\rs{a}_t|\rs{a}^{t-1}, \rs{d}^{t}, \rs{u}^{t-1}, \rs{x}^{t})$ from (\ref{eq:converse1condred}) and applying the definition of MI. 

The logic leading to (\ref{eq:converse1shannon}) does not hold if Assumption \ref{cs:constraint1} is relaxed to Assumption \ref{cs:constraint2}. However, under Assumption \ref{cs:constraint2}, the lower bound (\ref{eq:defmi1}) still holds. To show this, we analyze system model that is ``more-generous" than Fig. \ref{fig:ditharch}. First, assume that the sensor/encoder and decoder/controller policies are fixed such that the $\rs{a}_{t}$ satisfy Assumption \ref{cs:constraint2} at every $t$.  With the sensor/encoder and decoder/controller policies fixed, consider modifying the system model of Fig. \ref{fig:ditharch} by inserting a second ``virtual"  (hypothetical), deterministic lossless source codec between the original encoder/sensor and decoder/controller. Fig. \ref{fig:virtualcodec} gives an overview of this modification. Note that, by Fig. \ref{fig:ditharch}, the original sensor/encoder does not have access to $\rs{u}^{t-1}$, however the virtual encoder does. At every time $t$, we allow the virtual encoder to produce the codeword $\rs{c}_{t}$ given $\rs{a}_{t}$ and realizations of $(\rs{a}^{t-1},\rs{u}^{t-1},\rs{d}^{t})$. We require that, given realizations of $(\rs{a}^{t-1},\rs{u}^{t-1},\rs{d}^{t})$ and the codeword $\rs{c}_{t}$, the virtual decoder reproduce $\rs{a}_{t}$ exactly. We require that the codewords $\{\rs{c}_{t}\}$ also satisfy Assumption \ref{cs:constraint2} (replacing $\rs{a}_t$ with the virtual codeword $\rs{c}_t$). 

Fix the realizations $(\rs{a}^{t-1},\rs{d}^{t},\rs{u}^{t-1})=({a}^{t-1},{d}^{t},{u}^{t-1})$ and consider encoding  $\rs{a}_{t}$ into $\rs{c}_{t}$. Define the virtual encoder function mapping $\rs{a}_{t}$ to $\rs{c}_{t}$ given the realizations by $C_{a^{t-1}, d^{t}, u^{t-1}}:\{0,1\}^{*}\rightarrow \{0,1\}^{*}$. Define the decoder function $D_{a^{t-1}, d^{t}, u^{t-1}}:\{0,1\}^{*}\rightarrow \{0,1\}^{*}$. By assumption, $D_{a^{t-1}, d^{t}, u^{t-1}}(C_{a^{t-1}, d^{t}, u^{t-1}}(\rs{a}_{t}))=\rs{a}_{t}$. Assume $C_{a^{t-1}, d^{t}, u^{t-1}}$ is chosen to minimize $\mathbb{E}[\ell(\rs{c}_{t})]$. Thus, 
\begin{multline}\label{eq:cbetter}
    \mathbb{E}[\ell(\rs{a}_t)|(\rs{a}^{t-1},\rs{d}^{t},\rs{u}^{t-1})=(a^{t-1},d^{t},u^{t-1})] \ge\\ \mathbb{E}[\ell(\rs{c}_t)|(\rs{a}^{t-1},\rs{d}^{t},\rs{u}^{t-1})=(a^{t-1},d^{t},u^{t-1})],
\end{multline} since choosing both $C_{a^{t-1}, d^{t}, u^{t-1}}$ and $D_{a^{t-1}, d^{t}, u^{t-1}}$ to be identity (i.e. choosing $\rs{c}_{t}=\rs{a}_{t}$) ensures that the decoder recovers $\rs{a}_{t}$ and that the $\{\rs{c}_{t}\}$ satisfy Assumption \ref{cs:constraint2}. Since the prefix constraint in Assumption \ref{cs:constraint2} applies for all realizations, we can lower bound 
$\mathbb{E}[\ell(\rs{c}_t)|\rs{a}^{t-1}=a^{t-1},  \rs{d}^{t}={d}^{t}, \rs{u}^{t-1}=u^{t-1}]$ using the standard Kraft-McMillan inequality based proof. For any realizations ($a^{t-1}$, $d^{t}$, $u^{t-1}$) and choice of code $C_{a^{t-1}, d^{t}, u^{t-1}}$, we have (cf. \cite[Theorem 5.3.1]{elemIT})
\begin{multline}\label{eq:conditionalentropy}
    \mathbb{E}[\ell(\rs{c}_t)|(\rs{a}^{t-1},\rs{d}^{t},\rs{u}^{t-1})=(a^{t-1},d^{t},u^{t-1})] \ge\\ H\left(\rs{a}_t|(\rs{a}^{t-1},\rs{d}^{t},\rs{u}^{t-1})=(a^{t-1},d^{t},u^{t-1})\right).
\end{multline} Taking the expectation of (\ref{eq:cbetter}) and (\ref{eq:conditionalentropy}) with respect to the joint measure $(\rs{a}^{t-1}, \rs{d}^{t}, \rs{u}^{t-1})$ over realizations allows us to proceed as in (\ref{eq:converse1}). We have
\begin{IEEEeqnarray}{rCl}\label{eq:converse2}
     \mathbb{E}[\ell(\rs{a}_t)] &\ge& \mathbb{E}[\ell(\rs{c}_t)] \label{eq:refcbetter}\\
      &\ge& H(\rs{a}_t|\rs{a}^{t-1},  \rs{d}^{t}, \rs{u}^{t-1}) \label{eq:refconditionalentropy}\\ &\ge& I(\rs{a}_{t};\rs{x}^{t}|\rs{a}^{t-1},\rs{d}^{t},\rs{u}^{t-1})\label{eq:defmi2},
\end{IEEEeqnarray} where (\ref{eq:refcbetter}) is by taking expectations over realizations in (\ref{eq:cbetter}), (\ref{eq:refconditionalentropy}) follows likewise from (\ref{eq:conditionalentropy}), and (\ref{eq:defmi2}) follows as in (\ref{eq:defmi1}).

 Summing the identical bounds in (\ref{eq:defmi1}) and (\ref{eq:defmi2}) over $t=\{0,\dots,T\}$ and applying the definition of causally conditioned DI from (\ref{eq:didef}) proves (\ref{eq:lb_lemm}). 
\end{proof}
Lemma \ref{lemm:didpi} lower bounds the DI in Lemma \ref{lemm:milb} by a DI amenable to the rate-distortion formulations in \cite{SDP_DI} and \cite{reviewerPaper}.
\begin{lemma}\label{lemm:didpi}
In the system model of Figure \ref{fig:ditharch}, we have
\begin{align}\label{eq:conversehypothesis}
I(\rs{x}^T\rightarrow\rs{a}^{T}||\rs{d}^{T},\rs{u}_{+}^{T-1})\ge I(\rs{x}^T\rightarrow\rs{u}^{T}). 
\end{align}
\end{lemma}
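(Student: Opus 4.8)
The plan is to prove the slightly stronger statement that the difference between the two directed informations in (\ref{eq:conversehypothesis}) equals $I(\rs{x}^T;\rs{a}^T,\rs{d}^T\mid\rs{u}^T)\ge 0$, by comparing the two sums term by term and then telescoping. I read the causal conditioning in the left side of (\ref{eq:conversehypothesis}) so that at time $t$ one conditions on $\rs{d}^t$ and on $\rs{u}_{+}^{t}$, which as a collection of random variables is $\rs{u}^{t-1}$, so that this quantity coincides with the one appearing in Lemma \ref{lemm:milb}. With that reading, (\ref{eq:didef}) expresses the left side as $\sum_{t=0}^{T} I(\rs{x}^t;\rs{a}_t\mid\rs{a}^{t-1},\rs{d}^t,\rs{u}^{t-1})$ and the right side as $\sum_{t=0}^{T} I(\rs{x}^t;\rs{u}_t\mid\rs{u}^{t-1})$.

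First I would extract from (\ref{eq:ditherFactorization}), together with the fact that $\rs{x}^t$ is a deterministic function of $(\rs{x}_0,\rs{u}^{t-1},\rs{w}^{t-1})$, three conditional-independence facts: (i) from (\ref{eq:cwfact}), the controller's output obeys $\rs{u}_t\perp\rs{x}^t\mid(\rs{a}^t,\rs{d}^t,\rs{u}^{t-1})$; (ii) from (\ref{eq:exog}), the fresh process noise satisfies $\rs{w}_{t-1}\perp(\rs{x}^{t-1},\rs{a}^{t-1},\rs{d}^t,\rs{u}^{t-1})$, hence $\rs{x}_t\perp(\rs{a}^{t-1},\rs{d}^t)\mid(\rs{x}^{t-1},\rs{u}^{t-1})$; and (iii) from (\ref{eq:exog}), the fresh dither satisfies $\rs{d}_t\perp(\rs{x}^t,\rs{a}^{t-1},\rs{d}^{t-1},\rs{u}^{t-1})$. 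Only (iii) uses anything about the dither, and only that it is drawn afresh at each step, never its marginal law, which is consistent with the generality claimed in Theorem \ref{thm:converse}.

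Using (i), $I(\rs{x}^t;\rs{u}_t\mid\rs{a}^t,\rs{d}^t,\rs{u}^{t-1})=0$, so the $t$-th left-side term equals $I(\rs{x}^t;\rs{a}_t,\rs{u}_t\mid\rs{a}^{t-1},\rs{d}^t,\rs{u}^{t-1})$. Two applications of the mutual-information chain rule then give
\[ I(\rs{x}^t;\rs{a}_t,\rs{u}_t\mid\rs{a}^{t-1},\rs{d}^t,\rs{u}^{t-1}) - I(\rs{x}^t;\rs{u}_t\mid\rs{u}^{t-1}) = I(\rs{x}^t;\rs{a}^t,\rs{d}^t\mid\rs{u}^{t}) - I(\rs{x}^t;\rs{a}^{t-1},\rs{d}^t\mid\rs{u}^{t-1}). \]
The key step is to identify the subtracted term as $\Psi_{t-1}$, where $\Psi_s:=I(\rs{x}^s;\rs{a}^s,\rs{d}^s\mid\rs{u}^s)$: fact (ii) lets me replace $\rs{x}^t$ by $\rs{x}^{t-1}$ in $I(\rs{x}^t;\rs{a}^{t-1},\rs{d}^t\mid\rs{u}^{t-1})$, and then fact (iii) lets me drop $\rs{d}_t$, yielding $I(\rs{x}^{t-1};\rs{a}^{t-1},\rs{d}^{t-1}\mid\rs{u}^{t-1})=\Psi_{t-1}$. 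Hence the $t$-th term difference is exactly $\Psi_t-\Psi_{t-1}$, and summing over $t=0,\dots,T$ with $\Psi_{-1}:=0$ telescopes to $\Psi_T=I(\rs{x}^T;\rs{a}^T,\rs{d}^T\mid\rs{u}^T)\ge 0$, which proves (\ref{eq:conversehypothesis}).

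The main obstacle is the careful derivation of (i)--(iii) from the one-step kernels in (\ref{eq:ditherFactorization}); (ii) in particular rests on the observation that, once $(\rs{x}^{t-1},\rs{u}^{t-1})$ is conditioned on, the new state $\rs{x}_t$ carries no randomness beyond the fresh noise $\rs{w}_{t-1}$, which by (\ref{eq:exog}) is independent of all previously transmitted codewords and of the entire dither sequence. Given those facts, the rest is routine chain-rule bookkeeping, and the telescoping cancellation is precisely what makes the directed-information data-processing bound come out cleanly with the explicit slack $I(\rs{x}^T;\rs{a}^T,\rs{d}^T\mid\rs{u}^T)$.
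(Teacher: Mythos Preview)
Your proposal is correct and follows essentially the same route as the paper's own proof: both define the per-time difference $\phi_t$, use the three Markov/independence facts you label (i)--(iii) to rewrite $\phi_t$ as $I(\rs{x}^t;\rs{a}^t,\rs{d}^t\mid\rs{u}^t)-I(\rs{x}^{t-1};\rs{a}^{t-1},\rs{d}^{t-1}\mid\rs{u}^{t-1})$, telescope the sum, and identify the slack as $I(\rs{x}^T;\rs{a}^T,\rs{d}^T\mid\rs{u}^T)\ge 0$. The only cosmetic difference is the order in which the dither independence (your (iii)) is invoked: the paper first absorbs $\rs{d}_t$ into a ``triplet'' $I(\rs{x}^t;\rs{a}_t,\rs{d}_t,\rs{u}_t\mid\rs{a}^{t-1},\rs{d}^{t-1},\rs{u}^{t-1})$ before the chain-rule step, whereas you keep $\rs{d}^t$ intact through the chain rule and peel off $\rs{d}_t$ afterward; the algebra and the conditional-independence content are identical.
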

\begin{proof}
Let 
\begin{align}\label{eq:phi1def}
    \phi_t = I(\rs{x}^t;\rs{a}_t|\rs{a}^{t-1},\rs{d}^t,\rs{u}^{t-1})- I(\rs{x}^t;\rs{u}_t|\rs{u}^{t-1})
\end{align} and note that summing the $\phi_t$ at applying (\ref{eq:didef}) gives
\begin{align}\label{eq:sumprescope}
    I(\rs{x}^T\rightarrow\rs{a}^{T}||\rs{d}^{T},\rs{u}_{+}^{T-1})-I(\rs{x}^T\rightarrow\rs{u}^{T}) = \sum_{i=0}^{T}\phi_t. 
\end{align} We first demonstrate that 
\begin{multline}\label{eq:triplet}
        I(\rs{x}^t;\rs{a}_t|\rs{a}^{t-1},\rs{d}^t,\rs{u}^{t-1})  \\= I(\rs{x}^t;(\rs{a}_t,\rs{d}_{t},\rs{u}_t)|\rs{a}^{t-1},\rs{d}^{t-1},\rs{u}^{t-1}).
\end{multline} Via the chain rule, $ I(\rs{x}^t;(\rs{a}_t,\rs{d}_{t},\rs{u}_t)|\rs{a}^{t-1},\rs{d}^{t-1},\rs{u}^{t-1}) =$ $ I(\rs{x}^t;\rs{u}_{t}|\rs{a}^{t},\rs{d}^{t},\rs{u}^{t-1})+ I(\rs{x}^t;\rs{a}_{t}|\rs{a}^{t-1},\rs{d}^{t},\rs{u}^{t-1})+I(\rs{x}^t;\rs{d}_{t}|\rs{a}^{t-1},\rs{d}^{t-1},\rs{u}^{t-1})$. By (\ref{eq:ditherFactorization}), $\rs{d}_{t}$ is independent of $(\rs{a}^{t-1},\rs{d}^{t-1},\rs{u}^{t-1},\rs{x}^t)$ so  $I(\rs{x}^t;\rs{d}_{t}|\rs{a}^{t-1},\rs{d}^{t-1},\rs{u}^{t-1})=0$. Likewise,  (\ref{eq:ditherFactorization}) induces the Markov chain ($\rs{x}^{t}-(\rs{a}^t,\rs{d}^t,\rs{u}^{t-1})-\rs{u}_t$) (e.g. the control action at time $t$ is independent of the past states given the information at the decoder) and so $I(\rs{x}^t;\rs{u}_{t}|\rs{a}^{t},\rs{d}^{t},\rs{u}^{t-1})=0$. Substituting (\ref{eq:triplet}) in (\ref{eq:phi1def}) gives 
\begin{IEEEeqnarray}{rCl}
\phi_t &=&I(\rs{x}^t;(\rs{a}_t,\rs{d}_{t},\rs{u}_t)|\rs{a}^{t-1},\rs{d}^{t-1},\rs{u}^{t-1})- I(\rs{x}^t;\rs{u}_t|\rs{u}^{t-1})\nonumber
\\&=&I(\rs{x}^{t};(\rs{a}^t,\rs{d}^t)|\rs{u}^{t})-I(\rs{x}^{t};(\rs{a}^{t-1},\rs{d}^{t-1})|\rs{u}^{t-1})\label{eq:addnsub}\\&=&I(\rs{x}^{t};(\rs{a}^t,\rs{d}^t)|\rs{u}^{t})-I(\rs{x}^{t-1};(\rs{a}^{t-1},\rs{d}^{t-1})|\rs{u}^{t-1})\label{eq:lastmarkovconverse}
\end{IEEEeqnarray} Equality (\ref{eq:addnsub}) follows via expanding $ I(\rs{x}^t;(\rs{a}^t,\rs{d}^{t},\rs{u}_t)|\rs{u}^{t-1})$ via chain rule two different ways to show that
\begin{multline}\label{eq:addnsub2}
I(\rs{x}^t;(\rs{a}^{t-1},\rs{d}^{t-1})|\rs{u}^{t-1})+\\ I(\rs{x}^t;(\rs{a}_t,\rs{d}_{t},\rs{u}_t)|\rs{a}^{t-1},\rs{d}^{t-1},\rs{u}^{t-1}) =\\ I(\rs{x}^t;\rs{u}_t|\rs{u}^{t-1})+I(\rs{x}^t;(\rs{a}^t,\rs{d}^t)|\rs{u}^t),
\end{multline} and then adding right hand side and subtracting the left hand side of (\ref{eq:addnsub2}) from the preceding equation. For $t\ge 1$, (\ref{eq:lastmarkovconverse}) follows since by the chain rule $I(\rs{x}^{t};(\rs{a}^{t-1},\rs{d}^{t-1})|\rs{u}^{t-1})=I(\rs{x}^{t-1};(\rs{a}^{t-1},\rs{d}^{t-1})|\rs{u}^{t-1})+I(\rs{x}_{t};(\rs{a}^{t-1},\rs{d}^{t-1})|\rs{u}^{t-1},\rs{x}^{t-1})$. However, by the system model  (\ref{eq:ditherFactorization}) we have the Markov chain ($(\rs{a}^{t-1},\rs{d}^{t-1})-(\rs{x}^{t-1},\rs{u}^{t-1})-\rs{x}_t$) and so $I(\rs{x}_{t};(\rs{a}^{t-1},\rs{d}^{t-1})|\rs{u}^{t-1},\rs{x}^{t-1})=0$. When $t=0$, we have $I(\rs{x}^{t};(\rs{a}^{t-1},\rs{d}^{t-1})|\rs{u}^{t-1})=0$ and so we adopt the convention that $I(\rs{x}^{-1};(\rs{a}^{-1},\rs{d}^{-1})|\rs{u}^{-1})=0$. We can then apply (\ref{eq:lastmarkovconverse}) to telescope the sum in (\ref{eq:sumprescope}). This gives $\sum_{i=0}^{T}\phi_t=I(\rs{x}^{T};(\rs{a}^T,\rs{d}^T)|\rs{u}^{T})$. Substituting this into (\ref{eq:sumprescope}) and applying the non-negativity of MI proves (\ref{eq:conversehypothesis}).
\end{proof} Theorem \ref{thm:converse} is immediate upon combining Lemmas \ref{lemm:milb} and \ref{lemm:didpi}. 

We now briefly discuss how the lower bound in Theorem \ref{thm:converse} can be modified if we do not require that the codewords $\rs{a}_{t}$ conform to prefix constraints.  While prefix constraints can be useful in settings where multiple agents access a shared communication medium, they may be overly restrictive, in particular for the point-to-point case when the encoder and decoder share a common clock signal. The prefix constraints allowed us to use the entropy lower bounds in (\ref{eq:converse1shannon}) and (\ref{eq:conditionalentropy}) respectively, but lifting them permits a reduction in expected bitrate \cite{kostinaTradeoff} \cite{verduVariableLength}. Define the function $\theta(x):\mathbb{R}^{+}\rightarrow\mathbb{R}^{+}$ via $\theta(x) = x+(1+x)\log_2(1+x)-x\log_2(x)$. It can be shown that $\theta(x)$ is strictly increasing and concave. Thus, the inverse $\theta^{-1}:\mathbb{R}^{+}\rightarrow\mathbb{R}^{+}$ exists, is strictly increasing, and is convex. 

Assume that in the setting of Lemma \ref{lemm:milb}, we lift the prefix constraints and simply assume that $\rs{a}_{t}\in\{0,1\}^{*}$ for all $t$. Let $\rs{c}$ be a discrete random variable. By \cite[Equation (13)]{verduVariableLength}, if the support of $\rs{c}$ is mapped bijectively to the set of finite-length binary strings, the expected length of the string is greater than or equal to $\theta^{-1}(H(\rs{c}))$. Assume fixed sensor/encoder and decoder/controller policies.  As $\rs{a}_{t}$ is already a finite-length binary string, we have via \cite[Equation (13)]{verduVariableLength}
\begin{align}\label{eq:thetaDefVerdu}
    \mathbb{E}[\ell(\rs{a}_t)] \ge \theta^{-1}\left(H(\rs{a}_t)\right).
\end{align} Since $\theta^{-1}$ is increasing, we have by (\ref{eq:converse1condred})-(\ref{eq:defmi1}) that 
\begin{align}\label{eq:subsequentstep}
    \theta^{-1}\left(H(\rs{a}_t)\right) \ge \theta^{-1}\left(I(\rs{a}_{t};\rs{x}^{t}|\rs{a}^{t-1},\rs{d}^{t},\rs{u}^{t-1})\right).
\end{align} Combining (\ref{eq:thetaDefVerdu}) and (\ref{eq:subsequentstep}) gives the lower bound  $\mathbb{E}[\ell(\rs{a}_t)] \ge \theta^{-1}\left(I(\rs{a}_{t};\rs{x}^{t}|\rs{a}^{t-1},\rs{d}^{t},\rs{u}^{t-1})\right)$, which applies when  prefix constraints are relaxed.
Let $R_{t} = I(\rs{a}_{t};\rs{x}^{t}|\rs{a}^{t-1},\rs{d}^{t},\rs{u}^{t-1})$. By the convexity of $\theta^{-1}$ and Jensen's inequality, taking the time average gives
\begin{IEEEeqnarray}{rCl}
\sum_{i=0}^{T} \dfrac{\theta^{-1}(R_{t})}{T+1}\label{eq:citeserg}&\ge&  \theta^{-1}\left(\sum_{i=0}^{T}\dfrac{R_t}{T+1}\right)\label{eq:noprefixconversejensesns}\\ &\ge& \theta^{-1}\left(\dfrac{I(\rs{x}^T\rightarrow\rs{a}^{T}||\rs{d}^{T},\rs{u}_{+}^{T-1})}{T+1}\right)\label{eq:usedidef} \\&\ge& \theta^{-1}\left(\dfrac{I(\rs{x}^{T}\rightarrow\rs{u}^{T})}{T+1}\right)\label{eq:fromthedpi}
\end{IEEEeqnarray} where (\ref{eq:usedidef}) is the definition of DI and (\ref{eq:fromthedpi}) follows from the fact that $\theta^{-1}$ is increasing and the DI data processing inequality in Lemma \ref{lemm:didpi}.  Thus, if the prefix constraints in either Assumption \ref{cs:constraint1} or \ref{cs:constraint2} are relaxed, we have
 \begin{align} \label{eq:noasconverse}   \frac{1}{(T+1)}\sum_{i=0}^{T}\mathbb{E}[\ell(\rs{a}_t)] \ge  \theta^{-1}\left(\frac{I(\rs{x}^{T}\rightarrow\rs{u}^{T})}{T+1}\right).
 \end{align} The next section motivates a rate-distortion optimization seeking to minimize the time-average DI lower bound from (\ref{eq:converse_thm}) subject to constraints on control performance. As $\theta^{-1}$ is increasing and convex, (\ref{eq:noasconverse}) makes this meaningful even when prefix constraints are relaxed. 
 
\subsection{Rate Distortion Formulation}\label{ssec:rdf}
  We reexamine the optimization proposed in (\ref{eq:codewordLenghtOptimization}) in light of the converse result obtained in Theorem \ref{thm:converse}. We assume a time-invariant plant, e.g. $A_{t}=A$ and $B_{t}=B$, where (A,B) are stabilizable to ensure finite-control cost is attainable. Define the infinite-horizon generalization of (\ref{eq:codewordLenghtOptimization}) via 
  \begin{align}\nonumber
    \mathcal{L}^{*} &=& \left\{\begin{aligned}
& \underset{\mathbb{P}_\mathrm{E}, \mathbb{P}_{\mathrm{C}}}{\inf}  \underset{T\rightarrow \infty}{\lim\sup}\text{ }\frac{1}{T+1}\sum\nolimits_{t=0}^{T}\mathbb{E}[\ell(\rvec{a}_{t})] \\ &\text{s.t. }   \underset{T\rightarrow \infty}{\lim\sup}\frac{\sum\nolimits_{t=0}^{T}\mathbb{E}[\lVert \rvec{x}_{t+1} \rVert_{{Q}}^{2} +\lVert \rvec{u}_{t} \rVert_{{R}}^{2}]}{T+1} \le \gamma
\end{aligned}.\right.
\end{align} Defining  $\{\mathbb{P}_\mathrm{E}, \mathbb{P}_{\mathrm{C}}\}$ via (\ref{eq:encdithpol}) and (\ref{eq:contdithpol}), Theorem \ref{thm:converse} gives
\begin{align}\label{eq:optimization2}
    \mathcal{L}^*\ge \left\{\begin{aligned}
& \underset{\mathbb{P}_\mathrm{E}, \mathbb{P}_{\mathrm{C}}}{\inf}  \underset{T\rightarrow \infty}{\lim\sup}\text{ }\frac{1}{T+1}I(\rs{x}^T\rightarrow\rs{u}^{T}) \\ &\text{s.t. }   \underset{T\rightarrow \infty}{\lim\sup}\frac{\sum\nolimits_{t=0}^{T}\mathbb{E}[\lVert \rvec{x}_{t+1} \rVert_{{Q}}^{2} +\lVert \rvec{u}_{t} \rVert_{{R}}^{2}]}{T+1} \le \gamma
\end{aligned}. \right. 
\end{align} We can relax the optimization problem in (\ref{eq:optimization2}) by expanding these policy spaces. Note that both the control and communication costs involve only $\{\rs{x}_{i},\rs{u}_{i}\}$. By the system model, we have the factorization $ \mathbb{P}_{\rs{x}^{T},\rs{u}^{T}}[\rs{x}^{T},\rs{u}^{T}] =\prod_{t=0}^{T} \mathbb{P}_{\rs{x}_{t}|\rs{u}_{t-1},\rs{x}_{t-1}}[\rs{x}_{t}|\rs{u}_{t-1},\rs{x}_{t-1}]\mathbb{P}_{\rs{u}_{t}|\rs{u}^{t-1},\rs{x}^{t}}[\rs{u}_{t}|\rs{u}^{t-1},\rs{x}^{t}]$. In a sense, this follows from  causality. The kernels
$\{\mathbb{P}_{\rs{x}_{t}|\rs{u}_{t-1},\rs{x}_{t-1}}[\rs{x}_{t}|\rs{u}_{t-1},\rs{x}_{t-1}]\}$ are time-invariant and fixed by the plant model. Meanwhile, the kernels of the form
$\mathbb{P}_{\rs{u}_{t}|\rs{u}^{t-1},\rs{x}^{t}}[\rs{u}_{t}|\rs{u}^{t-1},\rs{x}^{t}]$ are induced by the encoder and controller policies. Let $\mathbb{P}_{\rs{u}||\rs{x}}$ denote the set of all sequences of Borel-measurable kernels of the form $\{\mathbb{P}_{\rs{u}_{t}|\rs{u}^{t-1},\rs{x}^{t}}[\rs{u}_{t}|\rs{u}^{t-1},\rs{x}^{t}]\}$. Note that $\mathbb{P}_{\rs{u}||\rs{x}}$ contains all kernels that can  be induced by encoder and controller policies satisfying (\ref{eq:encdithpol}) and (\ref{eq:contdithpol}). Consider the optimization 
\begin{align}\label{eq:rdf}
    \mathcal{R} = \left\{\begin{aligned}
& \underset{\mathbb{P}_{\rs{u}||\rs{x}}}{\inf}  \underset{T\rightarrow \infty}{\lim\sup}\text{ }\frac{1}{T+1}I(\rs{x}^T\rightarrow\rs{u}^{T}) \\ &\text{s.t. }     \underset{T\rightarrow \infty}{\lim\sup}\frac{\sum\nolimits_{t=0}^{T}\mathbb{E}[\lVert \rvec{x}_{t+1} \rVert_{{Q}}^{2} +\lVert \rvec{u}_{t} \rVert_{{R}}^{2}]}{T+1} \le \gamma
\end{aligned}.\right.
\end{align} 
As the domain of optimization in (\ref{eq:rdf}) is expanded with respect to that on the right-hand side of (\ref{eq:optimization2}), we have $ \mathcal{R}\le \mathcal{L}^{*}$.

The optimization in (\ref{eq:rdf}) is the subject of \cite{SDP_DI}. While, a priori, the optimization in (\ref{eq:rdf}) is over an infinite-dimensional policy space, \cite{SDP_DI} demonstrated that the minimum in (\ref{eq:rdf}) could be computed by a finite dimensional log-determinant optimization. Let $S$ be a stabilizing solution to the discrete algebraic Riccati equation $A^{\mathrm{T}}SA-S-A^{\mathrm{T}}SB(B^{\mathrm{T}}SB+R)^{-1}B^{\mathrm{T}}SA+Q = 0$, let $K=-(B^{\mathrm{T}}SB+R)^{-1}B^{\mathrm{T}}SA$, and let $\Theta = K^{\mathrm{T}}(B^{\mathrm{T}}SB+R)K$. It can be shown (cf. \cite[Section IV.B]{SDP_DI}) that the optimization in (\ref{eq:rdf}) is equivalent to
\begin{align}\label{eq:threestageRDF}
    \mathcal{R} &=& \left\{\begin{aligned}
& \underset{\substack{P,\Pi, \in\mathbb{R}^{m\times m}\\P,\Pi\succeq 0} }{\inf} \frac{1}{2}(-\log_2{\det{\Pi}}+\log_2{\det{W}} )\\ &\text{ }\text{s.t. }  \mathrm{Tr}(\Theta P)+\mathrm{Tr}(WS)\le \gamma \\ &\text{ }\text{ } P\preceq APA^\mathrm{T}+W\\&\text{ }\text{ }\begin{bmatrix}P-\Pi & PA^{\mathrm{T}} \\ AP & APA^{\mathrm{T}}+W \end{bmatrix}\succeq 0 
\end{aligned}.\right.
\end{align} The optimization (\ref{eq:threestageRDF}) is convex and amenable to solution via standard libraries\cite{SDP_DI}. 
\section{Conclusion}
In \cite{silvaFirst}, \cite{tanakaISIT} prefix-free coding schemes that conform to Assumption \ref{cs:constraint2} were shown to nearly achieve the communication cost for systems with a shared uniform dither sequence at the encoder and decoder. Likewise, bounds on the quantizer output entropy from \cite{kostinaTradeoff} (not assuming dithering) can be used to demonstrate the existence of a scheme conforming to Assumption \ref{cs:constraint1} that approximately achieves the lower bound in the high communication rate/strict control cost regime. 

The prefix constraints discussed in this work apply ``at time $t$" in the sense that both  definitions allow different prefix-free codebooks to be used at every time $t$. For example, under Assumption \ref{cs:constraint1} there is nothing preventing a codeword at time $t+1$ from being a prefix of some codeword at time $t$. Likewise, proofs of the achievability results in \cite{tanakaISIT}, \cite{kostinaTradeoff} imply the codebook is time-varying. Enforcing a time-invariant prefix constraint could enable more computationally efficient communication resource sharing in a network scenario; the end of the codeword can be detected by comparing received transmissions against a time-invariant list. Likewise, a fully time-invariant source coding scheme (e.g. where both the quantizer and the the mapping from quantizations to prefix-free codewords is itself time-invariant) eliminates the need to update codebooks at every timestep and reduces the computational complexity of both encoders and decoders. Bounds for time-invariant codecs are an opportunity for future work. 

\bibliographystyle{IEEEtran}
\bibliography{refs.bib}

\end{document}